\documentclass[11pt]{article}

\usepackage{amsmath}
\usepackage{a4wide}
\usepackage{amssymb}
\usepackage{amsthm}

\usepackage[ruled,linesnumbered]{algorithm2e}

\usepackage{complexity}

\newclass{\USPACE}{USPACE}
\newclass{\TIUSP}{TIME{-}USPACE}
\newclass{\pUSP}{poly{-}USPACE}
\newclass{\qNC}{quasi{-}\NC}

\newtheorem{theorem}{Theorem}
\newtheorem{corollary}[theorem]{Corollary}
\newtheorem{lemma}[theorem]{Lemma}
\newtheorem{claim}[theorem]{Claim}
\theoremstyle{definition}

\newcommand{\dist}[4]{\mathrm{dist}_{#1}^{#2}(#3, #4)}
\newcommand{\mindist}[4]{\mathrm{min.dist}_{#1}^{#2}(#3, #4)}
\newcommand{\lrad}[4]{\mathrm{l}_{#1}^{#2}(#3, #4)}
\newcommand{\len}{\mathrm{len}}
\newcommand{\calp}[4]{\mathcal{P}_{#1}^{#2}(#3, #4)}
\newcommand{\calo}{\mathcal{O}}
\newcommand{\true}{\mathsf{true}}
\newcommand{\false}{\mathsf{false}}
\newcommand{\BW}{\mathsf{BAD.WEIGHT}}

\title{Trading Determinism for Time in Space Bounded Computations}
\author{Vivek Anand T Kallampally \footnote{Indian Institute of Technology Kanpur. {\tt email:vivekana@cse.iitk.ac.in}} \and Raghunath Tewari \footnote{Indian Institute of Technology Kanpur. {\tt email:rtewari@cse.iitk.ac.in}}}

\begin{document}
\maketitle
\begin{abstract}
Savitch showed in $1970$ that nondeterministic logspace (\NL) is contained in deterministic $\mathcal{O}(\log^2 n)$ space but his algorithm requires quasipolynomial time. The question whether we can have a deterministic algorithm for every problem in {\NL} that requires polylogarithmic space and simultaneously runs in polynomial time  was left open. 
		
In this paper we give a partial solution to this problem and show that for every language in {\NL} there exists an unambiguous nondeterministic algorithm that requires $\mathcal{O}(\log^2 n)$ space and simultaneously runs in polynomial time.
\end{abstract}
	
\section{Introduction}

Deciding reachability between a pair of vertices in a graph is an important computational problem from the perspective of space bounded computations. It is well known that reachability in directed graphs characterizes the complexity class nondeterministic logspace (\NL). For undirected graphs the problem was known to be hard for the class deterministic logspace (\L) and in a breakthrough result Reingold showed that is contained in {\L} as well \cite{Reingold08}. Several other restrictions of the reachability problem are known to characterize other variants of space bounded complexity classes \cite{Etessami97,Barrington89,BarringtonEtAl98}.

Unambiguous computations are a restriction of general nondeterministic computations where the Turing machine has at most one accepting computation path on every input. In the space bounded domain, unambiguous logspace (in short {\UL}) is the class of languages for which there is a nondeterministic logspace bounded machine that has a unique accepting path for every input in the language and zero accepting path otherwise. {\UL} was first formally defined and studied in \cite{BJLR91, AJ93}. In 2000 Reinhardt and Allender showed that the class {\NL} is contained in a non-uniform version of {\UL} \cite{RA00}. In a subsequent work it was shown that under the hardness assumption that deterministic linear space has functions that cannot be computed by circuits of size $2^{\epsilon n}$, it can be shown that $\NL=\UL$ \cite{ARZ99}. Although it is widely believed that {\NL} and {\UL} are the same unconditionally and in a uniform setting, the question still remains open.

Savitch's Theorem states that reachability in directed graphs is in $\DSPACE (\log^2 n)$, however the algorithm requires quasipolynomial time \cite{Sav70}. On the other hand standard graph traversal algorithms such as DFS and BFS can decide reachability in polynomial time (in fact linear time) but require linear space. Wigderson asked the question that can we solve reachability in $\calo (n^{1-\epsilon})$ space and polynomial time simultaneously, for some $\epsilon >0$ \cite{Wig92}. Barnes et. al. gave a partial answer to this question by giving a $\calo (n/2^{\sqrt{\log n}})$ space and polynomial time algorithm for the problem \cite{BBRS92}. Although this bound has been improved for several subclasses such as planar graphs \cite{INPVW13}, layered planar graphs \cite{CT15b}, minor-free and bounded genus graphs \cite{CPTVY14}, for general directed graphs (and hence for the class {\NL}) we still do not have a better deterministic space upper bound simultaneously with polynomial time.

\subsection{Main Result}

In this paper we show that directed graph reachability can be decided by an unambiguous $\calo (\log^2 n)$ space algorithm that simultaneously requires only polynomial time. Thus we get an improvement in the time required by Savitch's algorithm by sacrificing determinism. Formally, we show the following theorem.
\begin{theorem}
\label{thm:main}
$\NL \subseteq \pUSP(\log^2 n)$.
\end{theorem}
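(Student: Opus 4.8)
The plan is to graft the Reinhardt--Allender technique for making nondeterminism unambiguous \cite{RA00} onto Savitch's divide-and-conquer recursion. Recall that \cite{RA00} shows: if a weighted digraph is \emph{min-unique}, i.e.\ for every pair of vertices joined by a path the minimum-weight path is unique, then reachability in it is decided unambiguously in logarithmic space and polynomial time by an inductive count of the number of vertices within each distance together with the sum of those distances; moreover the same counting machinery detects a weighting that fails min-uniqueness (this is our $\BW$ test). A random polynomially bounded weighting is min-unique with high probability by the isolation lemma, but finding a good weighting \emph{uniformly} is precisely what is missing for a fully uniform $\NL=\UL$: the natural family of candidate weightings has superpolynomial size, and brute-forcing it costs quasipolynomial time --- exactly Savitch's bound.

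First I would set up a hierarchy of weighted graphs $G_0,G_1,\dots,G_{\log n}$ on the common vertex set, with $G_0=G$, where the edge of $G_i$ on an ordered pair $(u,v)$ carries the value $\mindist{}{}{u}{v}$ defined as the minimum weight of a path from $u$ to $v$ that uses at most $2^i$ edges --- but only when this minimum is \emph{uniquely} attained; otherwise we declare the current weighting bad via $\BW$ and restart that level with a new one. The crucial point is that passing from $G_{i-1}$ to $G_i$ is a single squaring step: a shortest length-$\le 2^i$ path splits at a midpoint into two length-$\le 2^{i-1}$ pieces, so the combinatorial object over which a minimum must be isolated at level $i$ is just the family $\calp{}{}{u}{v}$ of at most $n$ candidate midpoints for each pair --- of polynomial size. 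Hence the bad events for a candidate weighting are controlled by a plain union bound (over the $n^2$ pairs and the $\le n^2$ competing pairs of midpoints inside each), so among the first polynomially many choices of a small modulus defining the level-$i$ weighting there is always a good one. Each level's weighting is recorded by its $\calo(\log n)$-bit modulus, so the cumulative certificate over all $\log n$ levels occupies only $\calo(\log^2 n)$ space.

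Then I would assemble the algorithm level by level: at level $i$, iterate over the polynomially many candidate moduli; for each, run the Reinhardt--Allender-style unambiguous inductive counting to build $G_i$ from $G_{i-1}$ (accessed only through the compact certificate assembled so far), aborting via $\BW$ the moment a non-unique minimum appears; fix the first surviving modulus and move to level $i+1$. After level $\log n$, reachability is read off from the edge $(s,t)$ of $G_{\log n}$. Each level costs polynomial time and $\calo(\log n)$ extra work space, and there are only $\log n$ levels, so the whole computation runs in polynomial time and $\calo(\log^2 n)$ space; unambiguity is inherited because every sub-computation --- each inductive-counting step, each guessed shortest path, each $\BW$ check --- has at most one accepting branch.

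The step I expect to be the main obstacle is keeping the running time polynomial rather than quasipolynomial. A naive recursive composition --- answering a level-$i$ query by spawning many level-$(i-1)$ queries --- simply reproduces Savitch's $n^{\Theta(\log n)}$ blow-up, so one must guarantee that constructing $G_i$ makes only polynomially many accesses to the previous level \emph{in total}, reading $G_{i-1}$ through its short certificate rather than recomputing it on demand. Getting this bookkeeping right --- the exact definitions of $\mindist{}{}{\cdot}{\cdot}$ and of the midpoint families $\calp{}{}{\cdot}{\cdot}$ at each level, a modular weighting that is simultaneously polynomially bounded, recomputable on the fly from the stored moduli, and derandomizable by a polynomial union bound, and a $\BW$ procedure that certifies uniqueness of the level-$i$ minima within the space and time budget --- is where the real work lies.
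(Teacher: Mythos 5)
Your high-level plan is the same as the paper's: double the path-length scale over $\log n$ rounds, argue that once the previous scale is min-unique there are only polynomially many candidate shortest paths at the next scale (one per midpoint/center), isolate them with an $\calo(\log n)$-bit modular hash chosen by exhaustive search over primes, append $\calo(\log n)$ bits per round to get an $\calo(\log^2 n)$-bit weight function, and certify min-uniqueness unambiguously with Reinhardt--Allender-style counting and a $\BW$ flag. That part of your argument matches Lemma \ref{lem:bound}, the FKS hashing step, and the structure of Algorithm \ref{algo:final}.

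The genuine gap is exactly the step you defer to ``where the real work lies'': how the unambiguous verification and distance queries stay polynomial-time once the accumulated weights have $\omega(\log n)$ bits. Your fix --- a hierarchy of squared graphs $G_0,\dots,G_{\log n}$ whose edges carry level-$(i-1)$ distances --- does not resolve it: $G_{i-1}$ has $n^2$ distance labels of $\calo(\log^2 n)$ bits each, so it cannot be stored in $\calo(\log^2 n)$ space, and ``reading $G_{i-1}$ through its short certificate'' means recomputing those distances on demand, which is either a recursive cascade of level-$(i-1)$ queries (the Savitch blow-up you are trying to avoid) or an unambiguous shortest-path computation on the original graph with the accumulated weight function --- precisely the problem you have not solved, since the standard double inductive counting steps the threshold $k$ through all possible weight values, of which there are now $2^{\calo(\log^2 n)}$. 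The paper's solution dispenses with the graph hierarchy entirely: it keeps the original $G$ with the weight function $W_j := B\cdot W_{j-1} + (w_0 \bmod p_j)$ and modifies the inductive counting so that, instead of incrementing $k$ by one, an auxiliary unambiguous subroutine (Algorithm \ref{algo:nextk}, Lemma \ref{lem:nextk}) computes the \emph{next weight value actually realized} as a shortest-path weight from the current source, i.e.\ $\min\{\dist{w}{i}{u}{v} \mid \dist{w}{i}{u}{v} > k\}$. Since a fixed source has at most $n$ distinct shortest-path weight values, each min-uniqueness check and each distance evaluation runs in polynomially many stages despite the $\calo(\log^2 n)$-bit weights. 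Without this (or an equivalent) device, your construction as described does not achieve the claimed polynomial time bound.
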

For the remainder of this paper all graphs that we consider are directed graphs unless stated otherwise. 

\subsection{Min-uniqueness of Graphs}

An important ingredient of our proof is the {\em min-uniqueness} property of graphs. A graph $G$ is said to be min-unique with respect to an edge weight function $W$ if the minimum weight path between every pair of vertices in $G$ is unique with respect to $W$. This turns out to be an important property and has been studied in earlier papers \cite{Wig94, GW96, RA00}. In fact, the fundamental component of Reinhardt and Allender's paper is a {\UL} algorithm for testing whether a graph is min-unique and then deciding reachability in min-unique graphs in {\UL} \cite{RA00}. They achieve this by proposing a {\em double inductive counting} technique which is a clever adaptation of the inductive counting technique of Immerman and Szelepcs\'{e}nyi \cite{Imm88,Sze88}. As a result of Reinhardt and Allender's algorithm, in order to show that reachability in a class of graphs can be decided in {\UL}, one only needs to design an efficient algorithm which takes as input a graph from this class and outputs an $ \calo (\log n) $ bit weight function with respect to which the graph is min-unique. This technique was successfully used to show a {\UL} upper bound on the reachability problem in several natural subclasses of general graphs such as planar graphs \cite{BTV07}, graphs with polynomially many paths from the start vertex to every other vertex \cite{PTV12}, bounded genus graphs \cite{DKTV11} and minor-free graphs \cite{AGGT16}. For the latter two classes of graphs reachability was shown to be in {\UL} earlier as well by giving reductions to planar graphs \cite{KV10, TW09}. Note that Reinhardt and Allender defines min-uniqueness for unweighted graphs where the minimum length path is unique, whereas we define it for weighted graphs where the minimum weight path is unique. However it can easily be seen that both these notions are equivalent.

\subsection{Overview of the Proof}

We prove Theorem \ref{thm:main} in two parts. We first show how to construct an $\calo(\log^2 n)$ bit weight function $W$ with respect to which the input graph $G$ becomes min-unique. Our construction of the weight function $W$ uses an iterative process to assign weights to the edges of $G$. We start by considering a subgraph of $G$ having a fixed radius and construct an $\calo (\log n)$ bit weight function with respect to which this subgraph becomes min-unique. For this we first observe that there are polynomially many paths in such a subgraph and then use the prime based hashing scheme of Fredman, Koml\'{o}s and Szemer\'{e}di \cite{FKS84} to give distinct weights to all such paths. Thereafter, in each successive round of the algorithm, we construct a new weight function with respect to which a subgraph of double the radius of the previous round becomes min-unique and the new weight function has an additional $\calo (\log n)$ bits. Hence in $\calo (\log n)$ many rounds we get a weight function which has $\calo (\log^2 n)$ bits and with respect to which $G$ is min-unique. We show that this can be done by an unambiguous, polynomial time algorithm using $\calo (\log^2 n)$ space. This technique is similar to the isolating weight construction in \cite{FGT16}, but their construction is in {\qNC}.

We then show that given a graph $G$ and an $\calo (\log^2 n)$ bit weight function with respect to which $G$ is min-unique, reachability in $G$ can be decided by an unambiguous, polynomial time algorithm using $\calo (\log^2 n)$ space. Note that a straightforward application of Reinhardt and Allender's algorithm will not give the desired bound. This is because ``unfolding'' a graph with $\calo (\log^2 n)$ bit weights will result in a quasipolynomially large graph. As a result we will not achieve a polynomial time bound. We tackle this problem by first observing that although there are $2^{\calo (\log ^2 n)}$ many different weight values, the weight of a shortest path can only use polynomial number of distinct such values.  Using this observation we give a modified version of Reinhardt and Allender's algorithm that iterates over the ``good'' weight values and ignores the rest. This allows us to give a polynomial time bound.

The rest of the paper is organized as follows. In Section \ref{sec:prelim} we define the various notations and terminologies used in this paper. We also state prior results that we use in this paper. In Section \ref{sec:result} we give the proof of Theorem \ref{thm:main}.

\section{Preliminaries}
\label{sec:prelim}
	
For a positive integer $n$, let $[n] = \{1,2, \ldots , n\}$. Let $G = (V, E)$ be a directed graph on $n$ vertices and let $E=\{e_{1}, e_{2}, \ldots, e_{m}\}$ be the set of edges in $G$. Let $s$ and $t$ be two fixed vertices in $G$. We wish to decide whether there exists a path from $s$ to $t$ in $G$. The {\em length} of a path $P$ is the number of edges in $P$ and is denoted as $\len (P)$. The {\em center} of a path $P$ is a vertex $x$ in $P$ such that the length of the path from either end point of $P$ to $x$ is at most $\lceil \len(P)/2 \rceil$ and $ x $ is no farther from the tail of $ P $ than from the head of $ P $. 
	
A {\em weight function} $w : E \rightarrow \mathbb{N}$ is a function which assigns a positive integer to every edge in $G$. The weight function $w$ is said to be {\em polynomially bounded} if there exists a constant $k$ such that $w(e) \leq \calo (n^k)$ for every edge $e$ in $G$. We use $ G_{w} $ to denote the weighted graph $G$ with respect to a weight function $ w $. For a graph $ G_{w} $, the {\em weight of a path} $ P $ denoted by $ w(P) $ is defined as the sum of weights of the edges in the path. A {\em shortest path} from $ u $ to $ v $ in $ G_{w} $ is a path from $ u $ to $ v $ with minimum weight. Let $ \calp{w}{i}{u}{v} $ denote the set of shortest paths from $u$ to $v$ of length at most $i$ in $G_w$. Thus in particular, the set of shortest paths from $ u $ to $ v $ in $ G_{w} $, $ \calp{w}{}{u}{v} = \calp{w}{n}{u}{v} $. 
We define the {\em distance} function with respect to a weight function and a nonnegative integer $i$ as 
\[ \dist{w}{i}{u}{v} = \left\{ \begin{array}{ll}
         w(P) & \textrm{ for }P \in \calp{w}{i}{u}{v}\\
                 \infty & \textrm{ if } \calp{w}{i}{u}{v} = \emptyset \end{array} \right. \] 

Correspondingly we define the function $l$  which represents the minimum length of such paths as 
\[ \lrad{w}{i}{u}{v} = \left\{ \begin{array}{ll}
         \min_{P \in \calp{w}{i}{u}{v}} \{ \len (P)\} & \textrm{ if } \calp{w}{i}{u}{v} \ne \emptyset\\
                 \infty & \textrm{ otherwise} \end{array} \right. \] 
		
A graph $ G_{w} $ is said to be {\em min-unique} for paths of length at most $ i $, if for any pair of vertices $u$ and $v$, the shortest path from $ u $ to $ v $ with length at most $i$, is unique. $ G_{w} $ is said to be min-unique if $ G_{w} $ is min unique for paths of arbitrary length. Define weight function 
\[w_{0}(e_{i}) :=  2^{i -1}, \textrm{ where } i \in [m].\] 
It is straightforward to see that for any graph $ G $, $w_{0}$ is an $ n $ bit weight function and $ G_{w_{0}} $ is min-unique. Wherever it is clear from the context that there is only one weight function $w$, we will drop the subscript $ w $ in our notations. 
	
For a graph $ G_{w} $, vertex $ u $ in $G$, length $ i $ and weight value $ k $, we define the quantities $ c_{k}^{i}(u)$ and $D_{k}^{i}(u)$ as the number of vertices at a distance at most $k$ from $u$, using paths of length at most $i$ and the sum of the distances to all such vertices respectively. Formally, 
	\begin{align*}
		c_{k}^{i}(u)  &= | \{v \mid \dist{w}{i}{u}{v} \leq k\} | \\
		D_{k}^{i}(u) &= \sum_{v \mid \dist{w}{i}{u}{v} \leq k}^{} \dist{w}{i}{u}{v}.
	\end{align*}
	
	An {\em unambiguous Turing machine} is a nondeterministic Turing machine that has at most one accepting computation path on every input \cite{VAL76}. We shall consider unambiguous computations in the context of space bounded computations. $\USPACE(s(n))$ denotes the class of languages decided by an unambiguous machine using $ \calo(s(n)) $ space. In particular, $\UL = \USPACE(\log n)$. $\TIUSP(t(n), s(n))$ denotes the class of languages decided by an unambiguous machine using $ \calo(s(n)) $ space and $ \calo(t(n)) $ time simultaneously. In particular, when $t(n)$ is a polynomial, we define 
\[\pUSP(s(n))=\bigcup_{k \geq 0} \TIUSP(n^k,s(n)).\]

For graphs having polynomially many paths, we use the well known hashing technique due to Fredman, Koml\'{o}s and Szemer\'{e}di \cite{FKS84} to compute a weight function that assigns distinct weights to all such paths. We state the result below in a form that will be useful for our purpose.

	\begin{theorem} \label{thm:hashing}
		\cite{FKS84, PTV12} For every constant $ c $ there is a constant $ c' $ so that for every set $ S $ of
		$ n $ bit integers with $ |S| \leq n^{c} $ there is a $ c' \log n$ bit prime number $ p $ so that for all $ x \neq y \in S, \ x \not\equiv y \bmod{p} $.
	\end{theorem}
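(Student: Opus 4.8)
The plan is to prove this by a counting argument over prime numbers. For a fixed pair $x \neq y \in S$, call a prime $p$ \emph{bad for $(x,y)$} if $x \equiv y \pmod p$, equivalently $p \mid (x-y)$. Since $x$ and $y$ are $n$-bit integers, the nonzero integer $x-y$ satisfies $0 < |x-y| < 2^n$, and any positive integer below $2^n$ is a product of fewer than $n$ primes (each at least $2$), hence has at most $n-1 < n$ distinct prime divisors. So at most $n$ primes are bad for $(x,y)$.

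Next I would sum over all pairs. There are at most $\binom{|S|}{2} \le |S|^2/2 \le n^{2c}/2$ unordered pairs drawn from $S$, so the number of primes that are bad for \emph{some} pair is at most $n \cdot n^{2c}/2 = n^{2c+1}/2$. Call a prime \emph{globally good} if it is bad for no pair; then fewer than $n^{2c+1}/2$ primes fail to be globally good. It therefore suffices to exhibit more than $n^{2c+1}/2$ primes of bit-length at most $c'\log n$, i.e.\ primes below $2^{c'\log n} = n^{c'}$, for a suitable constant $c'$: any such prime that is globally good witnesses the theorem.

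Finally I would invoke a Chebyshev-type lower bound on the prime-counting function: there is a constant $a>0$ with $\pi(N) \ge aN/\ln N$ for all $N \ge 2$. Applying this with $N = n^{c'}$ gives $\pi(n^{c'}) \ge a\, n^{c'}/(c'\ln n)$, and choosing $c' = 2c+2$ makes this at least $a\, n^{2c+2}/((2c+2)\ln n)$, which exceeds $n^{2c+1}/2$ once $n$ is larger than some constant depending only on $c$. For the finitely many remaining small values of $n$, the set $S$ ranges over a bounded family of instances, so a prime with the required property exists trivially and can be absorbed by enlarging $c'$ if necessary. This produces the desired $c'\log n$ bit prime $p$ with $x \not\equiv y \pmod p$ for all $x \neq y \in S$.

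The hard part is the last step — guaranteeing that the supply of primes of bounded bit-length genuinely outgrows the number of bad primes — which is precisely where a quantitative statement about the density of primes (Chebyshev's estimate, or the prime number theorem) is needed; everything else is routine bookkeeping with bit-lengths and the crude bound that an integer below $2^n$ has fewer than $n$ prime factors.
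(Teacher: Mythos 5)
The paper does not prove this statement at all --- it is imported verbatim as the ``FKS hashing lemma'' from \cite{FKS84, PTV12} and used as a black box. Your proof is correct and is essentially the standard argument from those references: each pair $x \neq y$ rules out fewer than $n$ primes since $|x-y| < 2^n$, at most $n^{2c}/2$ pairs rule out at most $n^{2c+1}/2$ primes in total, and Chebyshev's bound $\pi(N) = \Omega(N/\ln N)$ with $N = n^{c'}$, $c' = 2c+2$, supplies more primes than can be ruled out for all sufficiently large $n$, with the finitely many small $n$ absorbed into the constant. Your handling of the small-$n$ cases and of the reading of ``$c'\log n$ bit prime'' as $p < n^{c'}$ is the intended one, so there is nothing to repair.
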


Henceforth we will refer to Theorem \ref{thm:hashing} as the FKS hashing lemma.
\section{Min-unique Weight Assignment}
\label{sec:result}

	Reinhardt and Allender \cite{RA00} showed that for every $n$ there is a sequence of $n^2$ $\calo (\log n)$ bit weight functions such that every graph $G$ on $n$ vertices is min-unique with respect to at least one of them. For each weight function they construct an unweighted graph (say $G_w$) by replacing every edge with a path of length equal to the weight of that edge. Since the weights are $\calo (\log n)$ bit values therefore $G_w$ is polynomially large in $n$. Next they show that using the double inductive counting technique one can check unambiguously using a logspace algorithm if $G_w$ is min-unique, and if so then check if there is a path from $s$ to $t$ as well.  They iterate over all weight functions until they obtain one with respect to which $G_w$ is min-unique and use the corresponding graph $G_w$ to check reachability. Since we use an $\calo (\log^2 n)$ bit weight function with respect to which the input graph is min-unique, we cannot construct an unweighted graph by replacing every edge with a directed path of length equal to the corresponding edge weight.

In Section \ref{sec:wt} we give an algorithm that computes an $\calo (\log^2 n)$ bit, min-unique weight function and decides reachability in directed graphs. In Section \ref{sec:check} we check if a graph is min-unique. Although we use $\omega (\log n)$ bit weight functions, our algorithm still runs in polynomial time. In Section \ref{sec:dist} we show how to compute the $\dist{w}{i}{u}{v}$ function unambiguously.

\subsection{Construction of the weight function}
\label{sec:wt}

Theorem \ref{thm:wt} shows how to construct the desired weight function.

\begin{theorem}
\label{thm:wt}
There is a nondeterministic algorithm that takes as input a directed graph $G$ and outputs along a unique computation path, an $\calo (\log^{2} n) $ bit weight function $ W $ such that $G_W$ is min-unique, while all other computation paths halt and reject. For any two vertices $ s $ and $ t $ the algorithm also checks whether there is a path from $ s $ to $ t $ in G. The algorithm uses $\calo(\log^2 n)$ space and runs in polynomial time.
\end{theorem}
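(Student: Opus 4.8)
The plan is to construct $W$ by iterated \emph{radius doubling}. Assume without loss of generality that $G$ is simple, and let $W_0$ be the all-ones weight function, so that $G_{W_0}$ is trivially min-unique for all paths of length at most $r_0 := 1$. The algorithm runs $k := \lceil \log n \rceil$ rounds, maintaining after round $j$ a weight function $W_j$ of $\calo(j\log n)$ bits for which $G_{W_j}$ is min-unique for \emph{every} length $\ell \le r_j := 2^j$. Since $r_k \ge n$, the output $W := W_k$ has $\calo(\log^2 n)$ bits and $G_W$ is min-unique, and reachability from $s$ to $t$ is then read off from whether $\dist{W}{n}{s}{t} < \infty$.

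The core is the round step, which I would isolate as a lemma: if $G_w$ is min-unique for every length $\le r$, then from $w$ one can produce an $\calo(\log n)$-bit weight function $w'$ such that, with $W := 2^t w + w'$ for a suitable $t = \calo(\log n)$ (chosen so the $w'$-weight of every simple path stays below $2^t$), $G_W$ is min-unique for every length $\le 2r$. The structural observation is \emph{center-splitting}: if $P$ is a minimum-$w$-weight $u$--$v$ path among those of length $\le 2r$, then cutting at its center $x$ gives $P = P_1 P_2$ with $\len(P_1),\len(P_2)\le r$, and by a standard exchange argument (positive weights force shortest walks to be simple) each $P_i$ is the \emph{unique} minimum-$w$-weight path of length $\le \len(P_i)$ between its endpoints. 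Hence every minimum-$w$-weight path of length $\le 2r$ lies in a set $\mathcal{S}$ of $\calo(n^3 r^2)$ paths: writing $\pi^{\ell}_{a,x}$ for the unique minimum-$w$-weight $a$--$x$ path of length $\le \ell$, the set $\mathcal{S}$ consists of the concatenations $\pi^{\ell_1}_{a,x}\,\pi^{\ell_2}_{x,b}$. Since $\mathcal{S}$ is polynomially large, Theorem~\ref{thm:hashing} applied to the $w_0$-weights of the paths in $\mathcal{S}$ yields an $\calo(\log n)$-bit modulus $p$ under which these values are pairwise distinct; set $w'(e_i) := 2^{i-1} \bmod p$. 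To verify the lemma: two minimum-$W$-weight $u$--$v$ paths $P, P'$ of length $\le 2r$ must, by the base-$2^t$ separation, satisfy $w(P) = w(P')$ and $w'(P) = w'(P')$; the first equality forces both $P$ and $P'$ to be minimum-$w$-weight among length-$\le 2r$ paths (hence both of the canonical form above), and the second, since $w'(P) \equiv w_0(P) \pmod p$, forces $w_0(P) \equiv w_0(P') \pmod p$, hence $w_0(P) = w_0(P')$ by the choice of $p$, hence $P = P'$. Thus each round doubles the radius and adds $\calo(\log n)$ bits.

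Turning this into an unambiguous, polynomial-time, $\calo(\log^2 n)$-space algorithm: the only nondeterminism will live inside a subroutine that, given $G$, an $\calo(\log^2 n)$-bit weight function $w$ under which $G_w$ is min-unique up to the relevant length, and $u,v,i$, computes $\dist{w}{i}{u}{v}$ and $\lrad{w}{i}{u}{v}$ unambiguously; this subroutine, together with a companion that verifies min-uniqueness, is exactly what Sections~\ref{sec:dist} and~\ref{sec:check} will supply, and by the lemma above it is always invoked on a min-unique graph. Given that oracle, the canonical piece $\pi^{\ell}_{a,x}$ is traced deterministically: at the current vertex $c$ with target $x$ and remaining budget $\ell$, the next vertex is the unique neighbour $z$ with $w(c,z) + \dist{w}{\ell-1}{z}{x} = \dist{w}{\ell}{c}{x}$ (uniqueness because $G_w$ is min-unique at level $\ell \le r$), and while tracing we accumulate the piece's $w_0$-weight modulo a trial value $p$. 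To find a good $p$ we run $p = 2, 3, \dots$ and for each value trace and pairwise compare all $\calo(n^3 r^2)$ candidate paths; a good $p \le \mathrm{poly}(n)$ exists by Theorem~\ref{thm:hashing}, and no primality test is needed since only distinctness modulo $p$ is used. There are $\calo(\log n)$ rounds, each polynomial time; everything stored---the current $W_j$, loop counters, the accumulator---fits in $\calo(\log^2 n)$ bits; and sequentially composing the unambiguous oracle calls with deterministic glue keeps the whole computation unambiguous. A final oracle call evaluates $\dist{W}{n}{s}{t}$.

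The main obstacle is the oracle itself: computing $\dist{w}{i}{u}{v}$ unambiguously in polynomial time when $w$ carries $\omega(\log n)$ bits, since naively ``unfolding'' an edge of weight $2^{\calo(\log^2 n)}$ into a path of that length---the Reinhardt--Allender device---would inflate the graph to quasipolynomial size. I would address this separately (Section~\ref{sec:dist}) via the observation that a shortest path, although edge weights range over $2^{\calo(\log^2 n)}$ values, realizes only polynomially many distinct ones, so a modified double-inductive-counting procedure can iterate over just those ``good'' values. Within the present argument the delicate points are getting the invariant strength right---min-uniqueness for \emph{all} lengths $\le r_j$, not merely for paths of length exactly $r_j$, which is what legitimizes the edge-by-edge tracing of canonical pieces---and checking that the length bookkeeping keeps $\mathcal{S}$ polynomially bounded.
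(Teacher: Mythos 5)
Your construction follows the paper's proof in all essentials: iterated radius doubling over $\calo(\log n)$ rounds, the center-splitting argument showing that min-uniqueness up to length $r$ leaves only polynomially many relevant shortest paths of length at most $2r$ (the paper's Lemma~\ref{lem:bound} bounds these by $n$ per pair, hence $n^3$ in total; your canonical-concatenation set of size $\calo(n^3r^2)$ is a polynomially bounded superset, which serves the same purpose), FKS hashing of the $w_0$-values to add $\calo(\log n)$ fresh bits per round with the old weights taking lexicographic precedence, and deferral of the unambiguous distance computation and min-uniqueness testing to the machinery of Sections~\ref{sec:check} and~\ref{sec:dist}. Where you genuinely deviate is in how a good modulus is certified within a round: the paper iterates over $\calo(\log n)$-bit primes and, for each candidate, runs the unambiguous min-uniqueness checker (Lemma~\ref{lem:minunique}, Algorithm~\ref{algo:minunique}) on the new candidate weight function itself, whereas you iterate over all integers $p$ and verify the hashing condition directly, by deterministically tracing the canonical pieces using the distance oracle with respect to the previous, already-certified weight function. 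Your variant is sound and buys a cleaner invariant (the nondeterministic subroutine is only ever invoked on a weight function already known to be min-unique), at the cost of a larger polynomial running time and one detail you must make explicit: distinct tuples $(a,x,b,\ell_1,\ell_2)$ often describe the same path, and some concatenations are not simple, so ``pairwise distinct residues'' can only be demanded of candidates that are distinct as paths --- compare the traced pieces edge by edge in lockstep and only then compare residues, since otherwise no $p$ whatsoever would pass the test. With that reading a suitable $p \le \mathrm{poly}(n)$ exists by the FKS lemma because distinct simple paths receive distinct $w_0$-values, and the remaining bookkeeping (space $\calo(\log^2 n)$, polynomial time, unambiguity under sequential composition of unambiguous calls with deterministic glue, and the final reachability test via $\dist{W}{n}{s}{t}$) matches the paper's proof.
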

Since directed graph reachability is complete for {\NL}, Theorem \ref{thm:main} follows from Theorem \ref{thm:wt}.
  	\begin{algorithm}[h]
  		\SetAlgoNoLine
  		\DontPrintSemicolon
  		\SetKwFor{For}{for}{do}{endfor}
  		\SetKwFor{ForEach}{for each}{do}{endfor}
  		\SetKwIF{If}{ElseIf}{Else}{if}{then}{else if}{else}{endif}
  		\KwIn{($G, s, t)$ }
  		\KwOut{weight function $W:=W_q$, $\true$ if there is a path from $s$ to $t$ and $\false$ otherwise}
  		\Begin{
  			$ q := \log n $; $ W_{0} := 0 $ \;
  			\For{$j \leftarrow 1$ \KwTo $ q $}{
  				$ i := 2^{j} $; $ p := 2 $ \;
  				\Repeat{$ (G, W_{j}, i) $ is min-unique}
  				{
  					\tcc{ By the FKS hashing lemma $p$ is bounded by a polynomial in $n$, say $n^{c'}$. We define $B := n^{c'+2}$. }
  					$W_{j} := B \cdot W_{j-1} + (w_{0} \bmod  p) $ \;
  					Check whether $ (G, W_{j}, i) $ is min-unique using Algorithm \ref{algo:minunique} \;
  					$ p := $ next prime \;
  				}
   			}
  			\lIf{$ \dist{W_{q}}{n}{s}{t} \leq  B^{q}$}{\Return ($W_q, \true$)}
				\lElse{\Return ($W_q$, $\false$)}
  		}
  		\caption{Computes a min-unique weight function and checks for an $s-t$ path in $G$}
  		\label{algo:final}
  	\end{algorithm}
\begin{proof}[Proof of Theorem \ref{thm:wt}]
To prove Theorem \ref{thm:wt} we design an algorithm that outputs the desired weight function. The formal description of the construction is given in Algorithm \ref{algo:final}. The algorithm works in an iterative manner for $\log n$ number of rounds. Initially we consider all paths in $G$ of length at most $l$ where $l=2^1$. The number of such paths is bounded by $n^l$ and therefore by the FKS hashing lemma there exist a $ c' \log n $ bit prime $ p_{1} $ such that with respect to the weight function $ W_{1} := w_{0} \bmod p_{1} $, $ G_{w_{1}} $ is min-unique for paths of length at most $ l $. To find the right prime $ p_{1} $ we iterate over all  $c' \log n$ bit primes and use Lemma \ref{lem:minunique} to check whether $ G_{w_{1}} $ is min-unique for paths of length at most $ l $.  

We prove this by induction on the number of rounds, say $j$. Assume that $G_{W_{j-1}}$ is min-unique for paths of length at most $2^{j-1}$. In the $j$-th round, the algorithm considers all paths of length at most $2^j$. By applying Lemma \ref{lem:induction} we get a weight function $W_j$  from $W_{j-1}$ which uses $\calo (j \cdot \log n)$ bits and $G_{W_j}$ is min-unique for paths of length at most $2^j$. Hence in $\log n$ many rounds we get a weight function $W:= W_{\log n}$ such that $G_W$ is min-unique. Note that the inner {\bf repeat-until} loop runs for at most $n^{c'}$ iterations due to the FKS hashing lemma.

Let $p_j$ be the prime used in the $j$-th round of Algorithm \ref{algo:final}. Define $p' := \max \{p_j \mid j \in [\log n]\}$. By the FKS hashing lemma $p'$ is bounded by a polynomial in $n$, say $n^{c'}$. We set $B := n^{c'+2}$. This implies that for any weight function of the form $w = w_0 \bmod p_j$ and any path $P$ in $G$, $w(P) < B$. Observe that with respect to the final weight function $W$, for any path $P$ in $G$, $W(P) < B^q$.

Once we compute an $ \calo (\log^{2} n) $ bit weight function $ W $ such that $ G_{W} $ is min-unique, there exist a path from $ s $ to $ t $ if and only if $ \dist{W}{n}{s}{t} \leq  B^{q} $. This can be checked using Algorithm \ref{algo:dist} in $ \calo(\log^{2} n) $ space and polynomial time. Also Algorithm \ref{algo:dist} is a nondeterministic algorithm which returns $ \true $ or $ \false $ along a unique computation path while all other computation paths halt and reject.

In each round the size of $ W_{j} $ increases by $ \calo(\log n) $ bits and after $ \log n $ rounds $ W_{\log n} $ is an $\calo(\log^{2} n) $ bit weight function. By Lemma \ref{lem:minunique} checking whether a graph is min-unique with respect to an $\calo(\log ^{2} n) $ bit weight function requires $ \calo(\log ^{2} n) $ space. Thus the total space complexity of Algorithm \ref{algo:final} is $ \calo(\log ^{2} n) $.

The FKS hashing lemma guarantees that in each round only a polynomial number of primes need to be tested to find a weight function which is min-unique for paths of length at most $ 2^{j} $. By Lemma \ref{lem:minunique} checking whether a graph is min-unique for paths of length at most $ 2^{j} $ can be done in polynomial time. Thus each round runs in polynomial time. There are only $ \log n $ many round and hence Algorithm \ref{algo:final} runs in polynomial time.

By Lemma \ref{lem:minunique}, Algorithm \ref{algo:minunique} is a nondeterministic algorithm which outputs its answer along a unique computation path, while all other computation paths halt and reject. All other steps in Algorithm \ref{algo:final} are deterministic. This shows the unambiguity requirement of the theorem.
\end{proof}

	\begin{lemma}
	\label{lem:induction}
	There is a nondeterministic algorithm $\mathcal{A}$, that takes as inputs $ (G, w) $ where $G$ is a graph on $n$ vertices and $w$ is a $k$ bit weight function such that $G_{w}$ is min-unique for paths of length at most $l$. $\mathcal{A}$ outputs a $(k+ \calo (\log n))$ bit weight function $w'$ such that $G_{w'}$ is min-unique for paths of length at most $2l$, along a unique computation path while all other computation paths halt and reject. $\mathcal{A}$ uses $\calo(k+ \calo (\log n))$ space and runs in polynomial time.
	\end{lemma}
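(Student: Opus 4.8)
The plan is to let $\mathcal{A}$ mimic the inner \textbf{repeat}-loop of Algorithm~\ref{algo:final}: set $B := n^{c'+2}$ for the constant $c'$ furnished by the FKS hashing lemma, and cycle through all primes $p$ of $O(\log n)$ bits in increasing order; for each $p$ form the candidate weight function $w' := B\cdot w + (w_0 \bmod p)$, run the min-uniqueness tester of Lemma~\ref{lem:minunique} (Algorithm~\ref{algo:minunique}) to decide whether $G_{w'}$ is min-unique for paths of length at most $2l$, and output the first $w'$ that passes. Since $w'(e) < B(2^k+1)$, the output has $k + O(\log n)$ bits; there are only polynomially many primes to try, and each tester call runs in polynomial time and $O(k+\log n)$ space, so the same bounds hold for $\mathcal{A}$. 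The only nondeterministic ingredient is the tester, which by hypothesis answers along a unique computation path and rejects on all others; chaining these calls together with the deterministic bookkeeping, $\mathcal{A}$ produces its output along a unique path and rejects on the rest.

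The substance of the proof is to show that some prime in the searched range actually yields a $w'$ that is min-unique for length at most $2l$. The key structural fact is that min-uniqueness of $G_w$ for length $\le l$ forces the a priori exponential family of shortest length-$\le 2l$ paths to collapse to polynomially many. Fix $u, v$, put $W^* := \dist{w}{2l}{u}{v}$, let $P$ be any $u$-$v$ path of length $\le 2l$ with $w(P) = W^*$, and let $x$ be its center, so that $P = P_1 P_2$ with $\len(P_1), \len(P_2) \le \lceil \len(P)/2 \rceil \le l$. I will argue $w(P_1) = \dist{w}{l}{u}{x}$: if not, a strictly lighter length-$\le l$ path $\pi$ from $u$ to $x$ would give a walk $\pi P_2$ from $u$ to $v$ of length $\le 2l$ and $w$-weight $< W^*$; deleting its closed sub-walks (which is legitimate because the weights are positive) produces a genuine $u$-$v$ path of length $\le 2l$ and $w$-weight $< W^*$, contradicting the definition of $W^*$. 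Hence $P_1 \in \calp{w}{l}{u}{x}$, a singleton by min-uniqueness, so $P_1$ is determined by $(u,x)$; symmetrically $P_2$ is determined by $(x,v)$, so $P$ is determined by $u$, $v$, and its center $x$. Consequently there are at most $n$ such paths per pair, hence at most $n^3$ in total; call this set $R$.

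With $R$ in hand, the perturbation argument is routine. The integers $\{w_0(P) : P \in R\}$ are at most $n^3$ many and have at most $m \le n^2$ bits each, so the FKS hashing lemma produces an $O(\log n)$-bit prime $p$ with $w_0(P) \not\equiv w_0(Q) \bmod p$ for all distinct $P,Q \in R$. For this $p$ I claim $w' := B w + (w_0 \bmod p)$ is min-unique for length $\le 2l$. Given $u,v$ with a length-$\le 2l$ path between them, let $P$ be a $w'$-minimum such path. Because $B$ strictly exceeds the total of $(w_0 \bmod p)$ over any path, minimizing $w'$ first minimizes $w$; thus $w(P) = W^*$, so $P \in R$, and among all length-$\le 2l$ $u$-$v$ paths of $w$-weight $W^*$ (all lying in $R$) the path $P$ additionally minimizes the correction $\sum_{e \in P}(w_0(e) \bmod p)$. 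But distinct elements of $R$ have distinct corrections: if two corrections agreed as integers they would be congruent mod $p$, and since each correction is $\equiv w_0(\cdot) \bmod p$ this would force $w_0(P) \equiv w_0(Q) \bmod p$, contrary to the choice of $p$. So $P$ is the \emph{unique} $w'$-minimum length-$\le 2l$ $u$-$v$ path, proving min-uniqueness.

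The main obstacle is precisely the collapse in the second paragraph: FKS hashing can only be afforded on polynomially many objects, so the whole argument hinges on showing that the shortest length-$\le 2l$ paths are indexed by their centers. The shortcutting step — converting the shorter walk $\pi P_2$ into a shorter path — is the one place where positivity of the weights is essential, and it has to be lined up carefully with the paper's definition of ``center'' so that both halves are guaranteed to have length at most $l$.
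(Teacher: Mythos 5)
Your proposal is correct and follows essentially the same route as the paper: the center-based counting argument (Lemma \ref{lem:bound}) bounding the shortest length-$\le 2l$ paths by $n^{3}$, FKS hashing to break ties via $w' := B\cdot w + (w_{0} \bmod p)$, and a search over $O(\log n)$-bit primes verified by the tester of Lemma \ref{lem:minunique}. If anything, you are slightly more careful than the paper in two spots — shortcutting the walk $\pi P_2$ into a genuine path, and noting that the per-edge residue sum is only congruent to $w_{0}(P) \bmod p$ rather than equal to it — but the substance is identical.
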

		The encoding of the output weight function $ w' $ is the concatenation of the $ k $ bit representation of the input weight function $ w $ and an $ \calo(\log n) $ bit prime number $ p $. The output weight function $ w' $ is calculated as  $ w' := B \cdot w + w_{0} \bmod p $, where $ B $ is the number defined in Algorithm $ \ref{algo:final} $. Multiplication using $ B $ is used just to left shift $ w $ and make room for the new function $w_{0} \bmod p$. 	
	
	Lemma \ref{lem:induction} proves the correctness of each iteration of the outer {\bf for} loop of Algorithm \ref{algo:final}. Before proving the lemma, we will show that if $G_w$ is min-unique for paths of length at most $l$, then the number of minimum weight paths with respect to $w$ of length at most $2l$ is bounded by a polynomial independent of $l$. Hence it allows us to use the FKS hashing lemma to isolate such paths.
	
	\begin{lemma} \label{lem:bound}
		Let $G$ be a graph with $n$ vertices and $w$ be a weight function such the graph $ G_{w} $ is min-unique for paths of length at most $l$. Then for any pair of vertices $ u $ and $ v $, $ \left | \calp{w}{2l}{u}{v} \right | $ is at most $n$.
	\end{lemma}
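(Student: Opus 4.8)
The plan is to injectively map each shortest path in $\calp{w}{2l}{u}{v}$ to its center, so that $\left|\calp{w}{2l}{u}{v}\right|$ is bounded by the number of vertices. Fix $u$ and $v$ and set $d := \dist{w}{2l}{u}{v}$; every $P \in \calp{w}{2l}{u}{v}$ has $w(P) = d$ and $\len(P) \le 2l$. Let $x$ be the center of $P$. By the definition of center, $x$ is the vertex of $P$ at distance $\lfloor \len(P)/2 \rfloor$ from $u$ (the ``no farther from the tail than from the head'' clause pins it down uniquely), so $P$ decomposes as a concatenation $P = P_1 P_2$, where $P_1$ goes from $u$ to $x$ with $\len(P_1) = \lfloor \len(P)/2 \rfloor \le l$, $P_2$ goes from $x$ to $v$ with $\len(P_2) = \lceil \len(P)/2 \rceil \le l$, and $w(P_1) + w(P_2) = d$.

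The crux is the claim that $P_1$ is the unique shortest path from $u$ to $x$ among paths of length at most $l$ --- i.e.\ the single element of $\calp{w}{l}{u}{x}$ --- and symmetrically that $P_2$ is the unique element of $\calp{w}{l}{x}{v}$. I would argue by contradiction: if there were a path $P_1'$ from $u$ to $x$ with $\len(P_1') \le l$ and $w(P_1') < w(P_1)$, then concatenating $P_1'$ with $P_2$ produces a walk from $u$ to $v$ of length at most $2l$ and weight less than $d$; because $w$ takes positive values, excising cycles from this walk yields a simple path from $u$ to $v$ of length at most $2l$ and weight strictly below $d$, contradicting the definition of $d$. Hence $w(P_1) = \dist{w}{l}{u}{x}$, and since $G_w$ is min-unique for paths of length at most $l$, $P_1$ is forced to be the unique such path. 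The same reasoning applies to $P_2$.

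Putting these together, $P$ is entirely determined by its center $x$: it must be the concatenation of the unique element of $\calp{w}{l}{u}{x}$ with the unique element of $\calp{w}{l}{x}{v}$. So $P \mapsto (\text{center of } P)$ is an injection from $\calp{w}{2l}{u}{v}$ into $V$, giving $\left|\calp{w}{2l}{u}{v}\right| \le |V| = n$.

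The one place needing care is the splicing step, since $P_1' P_2$ need not be a simple path and therefore is not literally an element competing in $\calp{w}{2l}{u}{v}$; this is resolved cleanly by the positivity of the weight function, since deleting cycles can only decrease both weight and length. I therefore do not expect a genuine obstacle here --- the statement is essentially a counting consequence of min-uniqueness together with the halving property of centers. (The degenerate case $u=v$, where $\calp{w}{2l}{u}{v}$ is the single empty path, is trivially within the bound.)
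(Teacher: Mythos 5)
Your proof is correct and takes essentially the same route as the paper's: split each shortest path in $\calp{w}{2l}{u}{v}$ at its center, use min-uniqueness for paths of length at most $l$ to show the two halves are forced, and conclude that each vertex is the center of at most one such path. Your added care (excising cycles to keep the spliced competitor a genuine path, and arguing about both halves so that two paths sharing a center must coincide even when their first halves agree) only tightens details the paper's argument glosses over.
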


	\begin{proof}
		Let $P$ be a shortest path from $u$ to $v$ in $ G_{w} $ with length at most $2l$ with center vertex $x$. That is $ P \in \calp{w}{2l}{u}{v} $. Let $P_{1}$ and $P_{2}$ be the subpaths from $u$ to $x$ and $x$ to $v$. Since $x$ is the center of $P$, $P_{1}$ has length at most $l$. Note that $P_{1}$ is the unique shortest path of length at most $l$ from $u$ to $x$ in $ G_{w} $. This is because if there exists another path of length at most $l$ with a smaller weight than $ P_{1} $ from $u$ to $x$ then replacing $ P_{1} $ with this path in $P$ will result in a path of length at most $2l$ from $u$ to $v$ with a lower weight than $P$. But this cannot happen since $P$ is a shortest path from $u$ to $v$.
				
		\begin{claim}
			There is only one shortest path of length at most $2l$ from $u$ to $v$ with $x$ as its center.
		\end{claim}
		\begin{proof}
			Assume there is another shortest path $P'$ of length at most $ 2l $ from $u$ to $v$ with $x$ as its center. Let $P_1'$ be the subpath of $P'$ from $u$ to $x$. Since $x$ is the center of $P'$, $ P'_{1} $ is of length at most $l$. Similar to $ P_{1} $, $P'_{1}$ is a shortest path of length at most $l$ from $u$ to $x$. This means there are two shortest paths of length at most $l$ from $u$ to $x$. This is a contradiction since $G$ is min-unique for paths of length at most $l$. 
		\end{proof}
		Therefore each vertex can be the center of at most one path of length at most $ 2l $ from $ u $ to $ v $. Thus the total number of shortest paths of length at most $ 2l $ from $ u $ to $ v $ in $ G_{w} $ is at most $n$. Hence $ \left | \calp{w}{2l}{u}{v} \right | \leq n$. This completes the proof of Lemma \ref{lem:bound}.
	\end{proof}
		When we sum over all possible pairs of $ u $ and $ v $, the total number of shortest paths of length at most $2l$ in $G_{w}$ is at most $n^{3} $.

	\begin{proof}[Proof of Lemma \ref{lem:induction}]
		$G_{w}$ is min-unique for paths of length at most $l$. Therefore by Lemma \ref{lem:bound} the number of shortest paths between all pairs of vertices with at most $2l$ edges in $G$ is at most $ n^{3} $. Let $\mathcal{S}$ be the set of these $n^{3}$ shortest paths. With respect to the weight function $w_{0}$ (see Section \ref{sec:prelim}) each element of $\mathcal{S}$ gets a distinct weight. So by using the FKS hashing lemma we get a constant $c'$ and a $ c' \log n $ bit prime number $ p $ such that with respect to the weight function $\widehat{w}$ such that $ \widehat{w} := w_{0} \bmod p $, each element of $\mathcal{S}$ gets a distinct weight. Moreover, in $G$ between any pair of vertices the shortest path in $\mathcal{S}$ is unique.
		
		Let $B$ be the number as defined in Algorithm \ref{algo:final}. Now consider the weight function $w' := B \cdot w + \widehat{w}$. Since $w$ is a $k$ bit weight function and $\widehat{w}$ is an $\calo (\log n)$ bit weight function therefore $w'$ is a $(k+ \calo (\log n))$ bit weight function. Clearly $ w $ has higher precedence than $ \widehat{w} $ in $ w' $. So for any two paths $ P_{1} $ and $ P_{2} $ in $ G $ , we have if $ w'(P_{1}) < w'(P_{2})$ then either $ w(P_{1}) < w(P_{2})$ or both the predicates $w(P_{1}) = w(P_{2})$ and $\widehat{w}(P_{1}) < \widehat{w}(P_{2})$ are true. Additionally if $ w'(P_{1}) = w'(P_{2})$ then $w(P_{1}) = w(P_{2})$ and $\widehat{w}(P_{1}) = \widehat{w}(P_{2})$.

		All the unique shortest paths of length at most $ 2l $ in $ G_{w} $, will be unique shortest paths of length at most $ 2l $ in $ G_{w'} $ also. If there are multiple shortest paths of length at most $ 2l $ from $ u $ to $ v $ in $ G_{w} $, $\widehat{w}$ gives a unique weight to each of these paths. So $ G_{w'} $ is min-unique for paths of length at most $ 2l $.
		
		We can check whether a graph $ G_{w'} $ is min-unique for paths of length at most $ 2l $ using Lemma \ref{lem:minunique}. Since $ p $ is an $ c' \log n $ bit prime number, we can iterate over all the $ c' \log n $ bit primes and find $p$.	
	\end{proof}

\subsection{Checking for min-uniqueness}
\label{sec:check}

The next lemma shows how to check whether $G_w$ is min-unique for paths of length at most $l$ in an unambiguous manner. 
		\begin{algorithm}[h]
		\SetAlgoNoLine
		\DontPrintSemicolon
		\SetKwFor{For}{for}{do}{endfor}
		\SetKwFor{ForEach}{for each}{do}{endfor}
		\SetKwIF{If}{ElseIf}{Else}{if}{then}{else if}{else}{endif}
		\KwIn{($G, w, i $)}
		\KwOut{$\true$ if $G_w$ is not min-unique for paths of length at most $i$ and $\false$ otherwise}
		\Begin{
			$\BW := \false$ \;
			\tcc{$\BW$ is set to $\true$ whenever the weight function does not make the graph min-unique. Otherwise it remains $\false$. It is a boolean variable shared between Algorithms \ref{algo:induct} and \ref{algo:minunique} } 
			\ForEach{vertex $v$} {
				$c_{0}^{i}(v) := 1;$ $D_{0}^{i}(v) := 0;$ $k' := 0$ \;
				\Repeat { $\BW = \true$ } {
					$k := k';$ $c_{k}^{i}(v) := c_{k'}^{i}(v);$ $D_{k}^{i}(v) := D_{k'}^{i}(v)$ \;
					Find next $k'$ from ($G, w, v, i, k, c_{k}^{i}(v), D_{k}^{i}(v)$) using Algorithm \ref{algo:nextk} \;
					\lIf{$ k' = \infty $} {
						break
					}
					Compute ($c_{k'}^{i}(v), D_{k'}^{i}(v)$) from ($G, w, v, i, k, c_{k}^{i}(v), D_{k}^{i}(v), k'$) using Algorithm \ref{algo:induct} \;
				}
				\lIf {$\BW = \true$} {
					break
				}
			}
			\Return $\BW$ \;
		}		
		\caption{Check whether $G$ is min-unique for paths of length at most $i$} 
		\label{algo:minunique}
	\end{algorithm}

	\begin{lemma} \label{lem:minunique}
		There is a nondeterministic algorithm that takes as input a directed graph $G$, a $ k $ bit weight function $ w $ and a length $ i $ and outputs along a unique computation path whether or not the graph $ G_{w} $ is min-unique for paths of length at most $ i $, while all other computation paths halt and reject. The algorithm uses $\calo(k + \log n)$ space and runs in polynomial time. 
	\end{lemma}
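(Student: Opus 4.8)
The plan is to adapt Reinhardt and Allender's \emph{double inductive counting} \cite{RA00} to the weighted, length-bounded setting, while being careful never to ``unfold'' $G_w$ (which would be fatal for the time bound once $w$ has $\omega(\log n)$ bits). Fix a source vertex $v$. The heart of the argument is the following invariant, maintained by the \textbf{repeat}--\textbf{until} loop of Algorithm \ref{algo:minunique}: we keep a weight threshold $k$ together with the two integers $c_k^i(v)$ and $D_k^i(v)$, and we show that from a \emph{correct} triple $(k, c_k^i(v), D_k^i(v))$ an unambiguous subroutine can (i) compute the next threshold $k'$ at which the set of vertices reachable from $v$ by a path of length at most $i$ grows, and (ii) produce the correct triple $(k', c_{k'}^i(v), D_{k'}^i(v))$, or else discover that $G_w$ is not min-unique for length at most $i$ and set the shared flag $\BW$. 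Given this, the outer loop simply walks the thresholds for each $v$ and reports whether $\BW$ was ever raised.

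For the inductive step I would do essentially what Reinhardt--Allender do, with two modifications to account for weights and for the length cap. To certify that a vertex $u$ satisfies $\dist{w}{i}{v}{u} \le k'$, the subroutine guesses a path from $v$ to $u$ edge by edge, maintaining a running length counter (rejecting once it exceeds $i$) and a running weight counter (rejecting once it exceeds $k'$), both stored in $\calo(k+\log n)$ bits; the guessed path both witnesses membership and gives an upper bound on the distance. Summing the guessed weights over all $c_{k'}^i(v)$ claimed vertices and comparing against the claimed $D_{k'}^i(v)$ forces every guessed path to have weight \emph{exactly} the distance --- the Reinhardt--Allender trick that ``$D$ pins down minimality'': if even one guess were non-minimal the sum would come out too large, and if $c_{k'}^i(v)$ were too large we could not exhibit enough vertices. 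Min-uniqueness is monitored along the way: when a vertex $u$ \emph{first} enters the reachable set (i.e.\ at the threshold $k'$ with $\dist{w}{i}{v}{u}=k'$), the subroutine checks, again via counting, whether there are two distinct paths of length at most $i$ and weight $k'$ from $v$ to $u$; if so it sets $\BW := \true$. The ``next $k'$'' routine (Algorithm \ref{algo:nextk}) is realized the same way: over all $u$ not yet reachable it guesses the smallest weight of a length-$\le i$ path from $v$, certifies that weight by a guessed path and certifies its minimality against the current $(c_k^i(v), D_k^i(v))$, and returns $\infty$ when no such $u$ exists.

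The polynomial-time bound is where the length restriction and the counting pay off. Although a path weight can be super-polynomially large (it may have $\Theta(k)$ bits), the \emph{distinct} finite values taken by $\dist{w}{i}{v}{\cdot}$ over the at most $n$ vertices number at most $n$; hence the \textbf{repeat}--\textbf{until} loop executes at most $n$ times per source and $\calo(n^2)$ times in all. Each iteration runs the subroutines above, which make polynomially many nondeterministic guesses of paths of length at most $i \le n$ over at most $n$ vertices, so each iteration takes polynomial time; all arithmetic is on $\calo(k+\log n)$-bit numbers, giving the claimed $\calo(k+\log n)$ space. The point is that we jump directly from one interesting threshold to the next instead of incrementing the weight, so the quasipolynomially many possible weight values are never enumerated.

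Finally, unambiguity. Each subroutine call is unambiguous in the Reinhardt--Allender sense: on a correct input triple there is exactly one bundle of guesses --- the true shortest paths, taken in a canonical (say lexicographically least) order --- that survives all the consistency checks, and every other guess sequence halts and rejects; and when $G_w$ is not min-unique for length at most $i$, the first offending threshold is detected and $\BW$ set along that unique surviving path. Composing polynomially many such subroutines sequentially while reusing work space, the whole of Algorithm \ref{algo:minunique} has a unique accepting path, so it outputs the correct Boolean along a unique computation path while all others halt and reject. The step I expect to be the most delicate is making uniqueness of the surviving computation path coexist with the detection of \emph{non}-uniqueness of shortest paths: one must fix a canonical witness path per vertex so that, even while the algorithm is busy discovering a second shortest path of equal weight, its own nondeterministic choices remain forced --- exactly the subtlety Reinhardt and Allender resolve, and it carries over here once lengths and weights are tracked by explicit counters.
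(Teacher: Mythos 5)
Your proposal is correct and follows essentially the same route as the paper: double inductive counting \`a la Reinhardt--Allender, maintaining $(k, c_k^i(v), D_k^i(v))$, jumping directly to the next realized threshold $k'$ (so only at most $n$ thresholds per source are visited, giving polynomial time), certifying distances unambiguously via the count/sum check, and detecting a second shortest path at the moment a vertex first enters the ball. The only point where you diverge is the final ``canonical (lexicographically least) witness'' device, which the paper does not need: all nondeterministic path-guessing is confined to the radius-$k$ ball that has already been certified min-unique (otherwise $\BW$ would have been set and the loop exited), and the duplicate-shortest-path test at weight $k'$ is carried out deterministically by comparing pairs of predecessors $x \neq x'$ with $\dist{w}{i}{u}{x} + w(x,v) = k'$, so no tie-breaking among equal-weight paths is ever required.
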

	
	For every vertex $v$ in the $G_w$ we check whether there are two minimum weight paths of length at most $i$ to some other vertex in $G$. Algorithm \ref{algo:minunique} gives a formal description of this process. The algorithm iterates over all shortest path weight values that can be achieved by some path of length at most $i$. 
		
In the $k$-th stage of the algorithm it considers a ball of radius $k$ consisting of vertices which have a shortest path of weight at most $k$ from $v$ and length at most $i$. $c_{k}^{i}(v)$ denotes the number of vertices in this ball and $D_{k}^{i}(v)$ denotes the sum of the weights of the shortest paths to all such vertices. Initially $k = 0$, $c_{0}^{i}(v) =1$ (consisting of only the vertex $v$) and $D_{0}^{i}(v)=0$. 

A direct implementation of the double inductive counting technique of Reinhardt and Allender \cite{RA00} does not work since this would imply that we cycle over all possible weight values, which we cannot afford. We bypass this hurdle by considering only the relevant weight values. We compute the immediate next shortest path weight value $k'$, and use $k'$ as the weight value for the next stage of the algorithm. This computation is implemented in Algorithm \ref{algo:nextk}). Lemma \ref{lem:nextk} proves the correctness of this process. Note that the number of shortest path weight values from a fixed vertex is bounded by the number of vertices in the graph. This ensure that the number of iterations of the inner {\bf repeat-until} loop of Algorithm \ref{algo:minunique} is bounded by $n$. 

	\begin{algorithm}[h]
		\SetAlgoNoLine
		\DontPrintSemicolon
		\SetKwFor{For}{for}{do}{endfor}
		\SetKwFor{ForEach}{for each}{do}{endfor}
		\SetKwIF{If}{ElseIf}{Else}{if}{then}{else if}{else}{endif}
		\KwIn{($G, w, u, i, k, c_{k}^{i}(u), D_{k}^{i}(u)$)}
		\KwOut{ $ k' := \min \{ \dist{w}{i}{u}{v} \mid \dist{w}{i}{u}{v} > k, \ v \in V \} $ }
		\Begin{
			$k' := \infty$ \;
			\ForEach{vertex $v$} {
				\If {$\neg (\dist{w}{i}{u}{v} \leq k)$} {
					$\mindist{w}{i}{u}{v} := \infty$ \;
					\ForEach {$x$ such that $(x, v)$ is an edge} {
						\If {$\dist{w}{i}{u}{x} \leq k$ and $ \lrad{w}{i}{u}{x} + 1 \leq i $} {
							\If {$\mindist{w}{i}{u}{v} > \dist{w}{i}{u}{x} + w(x,v)$} {
								$\mindist{w}{i}{u}{v} := \dist{w}{i}{u}{x} + w(x,v)$ \;
							}
						}
					}
					\lIf {$k' > \mindist{w}{i}{u}{v}$} {
						$k' := \mindist{w}{i}{u}{v}$
					}
				}
			}
			\Return $k'$ \; 
		}
		\caption{Find the next smallest weight value $k' > k$ among all paths of length at most $ i $ from $ u $}
		\label{algo:nextk}
	\end{algorithm}

\begin{lemma}
\label{lem:nextk}
Given $(G, w, u, i, k, c_{k}^{i}(u), D_{k}^{i}(u))$, Algorithm \ref{algo:nextk} correctly computes the value $\min \{ \dist{w}{i}{u}{v} \mid \dist{w}{i}{u}{v} > k, \ v \in V \} $.
\end{lemma}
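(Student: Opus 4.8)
The plan is to show that the value $k'$ returned by Algorithm~\ref{algo:nextk} equals $k^{*} := \min\{\dist{w}{i}{u}{v}\mid \dist{w}{i}{u}{v}>k,\ v\in V\}$ by proving the two inequalities $k'\ge k^{*}$ and $k'\le k^{*}$ separately. Unwinding the pseudocode, $k'$ is the minimum, over all vertices $v$ with $\dist{w}{i}{u}{v}>k$, of the quantity $\mindist{w}{i}{u}{v}=\min\{\dist{w}{i}{u}{x}+w(x,v)\mid (x,v)\in E,\ \dist{w}{i}{u}{x}\le k,\ \lrad{w}{i}{u}{x}+1\le i\}$, with the usual convention that an empty minimum is $\infty$; so it suffices to analyze this expression.

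For the lower bound $k'\ge k^{*}$ I would argue that every term that can ever be assigned to $k'$ is at least $k^{*}$. Fix $v$ with $\dist{w}{i}{u}{v}>k$ and an edge $(x,v)$ meeting the two side conditions, and let $Q$ be a shortest path from $u$ to $x$ of length at most $i$ realizing $\len(Q)=\lrad{w}{i}{u}{x}\le i-1$. Then $Q$ followed by the edge $(x,v)$ is a walk from $u$ to $v$ of length at most $i$ and weight $\dist{w}{i}{u}{x}+w(x,v)$; it is in fact a simple path, since if $v$ lay on $Q$ the prefix of $Q$ up to $v$ would witness $\dist{w}{i}{u}{v}\le \dist{w}{i}{u}{x}\le k$, contradicting the choice of $v$. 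Hence $\dist{w}{i}{u}{v}\le \dist{w}{i}{u}{x}+w(x,v)$, and since $\dist{w}{i}{u}{v}$ is itself a realized distance exceeding $k$ we get $\dist{w}{i}{u}{x}+w(x,v)\ge \dist{w}{i}{u}{v}\ge k^{*}$. Taking minima (and noting that the initial value $\infty$ and any empty $\mindist{w}{i}{u}{v}$ are also $\ge k^{*}$) gives $k'\ge k^{*}$, matching $k^{*}=\infty$ when no vertex or edge qualifies.

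For the upper bound $k'\le k^{*}$, assume $k^{*}<\infty$ and pick a vertex $v^{*}$ with $\dist{w}{i}{u}{v^{*}}=k^{*}$; among all shortest paths of length at most $i$ from $u$ to $v^{*}$ choose one, $P$, of minimum length, so $\len(P)=\lrad{w}{i}{u}{v^{*}}$ and $w(P)=k^{*}$. Let $x$ be the penultimate vertex of $P$ and $P'$ the prefix of $P$ ending at $x$, so $(x,v^{*})\in E$, $\len(P')=\len(P)-1\le i-1$, and $w(P')=k^{*}-w(x,v^{*})$. The crucial claim is that $P'$ is itself a shortest path of length at most $i$ from $u$ to $x$, i.e.\ $\dist{w}{i}{u}{x}=w(P')$; granting this, $\dist{w}{i}{u}{x}=k^{*}-w(x,v^{*})<k^{*}$, so by minimality of $k^{*}$ among realized distances exceeding $k$ we have $\dist{w}{i}{u}{x}\le k$, while $\lrad{w}{i}{u}{x}\le \len(P')\le i-1$. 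Thus the edge $(x,v^{*})$ meets both side conditions and contributes $\dist{w}{i}{u}{x}+w(x,v^{*})=k^{*}$ to $\mindist{w}{i}{u}{v^{*}}$, whence $k'\le k^{*}$.

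The main obstacle is establishing the crucial claim — an optimal-substructure property for length-bounded shortest paths. I would prove it by contradiction: if some $u$--$x$ path $R$ had $\len(R)\le i$ and $w(R)<w(P')$, then whenever $\len(R)\le i-1$ the path $R\cdot(x,v^{*})$ has length at most $i$ and weight below $k^{*}$, contradicting $\dist{w}{i}{u}{v^{*}}=k^{*}$; so the only remaining possibility is a lighter $u$--$x$ path of length exactly $i$, which cannot be shortened, and this case has to be excluded using the minimality of $\len(P)$ together with the structural situation in which Algorithm~\ref{algo:nextk} is invoked (the graph is being tested for min-uniqueness for paths of length at most $i$, and the functions $\dist{w}{i}{u}{\cdot}$ and $\lrad{w}{i}{u}{\cdot}$ on the current ball of radius $k$ are supplied consistently through $c_{k}^{i}(u)$ and $D_{k}^{i}(u)$). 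Getting this length bookkeeping exactly right, so that any cheaper route to $x$ genuinely splices into $P$ within the length budget $i$ and contradicts the choice of $P$, is the delicate step; once it is in place, the two inequalities above finish the proof.
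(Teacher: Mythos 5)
Your lower-bound direction ($k'\ge k^{*}$) is sound, and it is already more careful than what the paper offers: the paper's entire argument for this lemma is a two-sentence paraphrase of the pseudocode, asserting that $\mindist{w}{i}{u}{v}$ equals the minimum weight of a path to $v$ obtained by extending a ball vertex, and then taking minima. The problem is the upper-bound direction, and you have correctly located but not closed the hole: your ``crucial claim'' that the prefix $P'$ of a minimum-length shortest path to $v^{*}$ realizes $\dist{w}{i}{u}{x}$ (or at least that $x$ passes the test $\lrad{w}{i}{u}{x}+1\le i$). The route you sketch for the remaining case --- excluding a lighter $u$--$x$ path of length exactly $i$ by the minimality of $\len(P)$ --- cannot work, because that case is genuinely realizable. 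Take $i=3$, $k=4$, edges $u\to a\to b\to x$ of weight $1$ each, $u\to c\to x$ with weights $4$ and $4$, and $x\to v$ of weight $1$. Then $\dist{w}{3}{u}{x}=3\le k$ but $\lrad{w}{3}{u}{x}=3$, so the unique in-neighbour $x$ of $v$ fails the length test and Algorithm~\ref{algo:nextk} returns $\infty$, while $\min\{\dist{w}{3}{u}{z}\mid \dist{w}{3}{u}{z}>4\}=\dist{w}{3}{u}{v}=9$. Note that this graph is min-unique for paths of length at most $3$, the value $k=4$ is a realized distance (to $c$) and hence arises in the iteration of Algorithm~\ref{algo:minunique}, and $P=u\,c\,x\,v$ is the unique, hence minimum-length, shortest length-at-most-$3$ path to $v$ --- so the contradiction you hoped to extract from the minimality of $\len(P)$ is simply not available.

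The moral is that the delicate step is not bookkeeping: the length-bounded distance $\dist{w}{i}{u}{\cdot}$ does not satisfy the optimal-substructure recurrence that both the algorithm's candidate values and your upper bound rely on, because a cheaper route to the penultimate vertex may exhaust the length budget $i$ and yet block the qualification test $\lrad{w}{i}{u}{x}+1\le i$. Any complete proof therefore has to import additional properties of the context in which Algorithm~\ref{algo:nextk} is invoked (the specific weight functions and invariants maintained by Algorithms~\ref{algo:final} and~\ref{algo:minunique}), or prove a suitably weakened statement; the paper's own justification glosses over exactly this point rather than resolving it. As it stands, your proposal establishes one inequality, correctly isolates the difficulty in the other, but leaves a genuine gap, and the specific repair you suggest fails.
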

To see the correctness of Lemma \ref{lem:nextk} observe that for every vertex $v$ such that $\dist{w}{i}{u}{v} > k$, the algorithm cycles through all vertices $x$ such that there is an edge from $x$ to $v$ and the length of the path from $u$ to $x$ is at most $i-1$. It computes the minimum weight of such a path and store it in the variable $\mindist{w}{i}{u}{v}$. It then computes the minimum value of $\mindist{w}{i}{u}{v}$ over all possible vertices $v$ and outputs it as $k'$, as required.

After we get the appropriate weight value $k'$, we then compute the values of $c_{k'}^{i}(v)$ and $D_{k'}^{i}(v)$ by using a technique similar to Reinhardt and Allender (implemented in Algorithm \ref{algo:induct}). Additionally we also maintain a shared flag value $\BW$ between Algorithms  \ref{algo:minunique} and \ref{algo:induct}, which is set to $\true$ if $G_w$ is not min-unique for paths of length at most $i$, else it is $\false$.

	\begin{algorithm}[h]
		\SetAlgoNoLine
		\DontPrintSemicolon
		\SetKwFor{For}{for}{do}{endfor}
		\SetKwFor{ForEach}{for each}{do}{endfor}
		\SetKwIF{If}{ElseIf}{Else}{if}{then}{else if}{else}{endif}
		\KwIn{($G, w, u, i, k, c_{k}^{i}(u), D_{k}^{i}(u), k'$)}
		\KwOut{($c_{k'}^{i}(u), D_{k'}^{i}(u)$) and also flag $\BW$ }
		\Begin{
			$c_{k'}^{i}(u) := c_{k}^{i}(u)$; $D_{k'}^{i}(u) := D_{k}^{i}(u)$ \;
			\ForEach{vertex $v$} {
				\If {$\neg (\dist{w}{i}{u}{v} \leq k)$} {
					\ForEach {$x$ such that $(x, v)$ is an edge} {
						\If {$\dist{w}{i}{u}{x} \leq k$ and $\dist{w}{i}{u}{x} + w(x,v) = k'$ and $ \lrad{w}{i}{u}{x} + 1 \leq i $} {
							$c_{k'}^{i}(u):= c_{k'}^{i}(u) + 1$; $D_{k'}^{i}(u) := D_{k'}^{i}(u)+k'$ \;
							\ForEach {$x' \neq x$ such that ($x', v$) is an edge} {
								\If { $\dist{w}{i}{u}{x'} \leq k$ and $\dist{w}{i}{u}{x'} + w(x',v) = k'$ and $ \lrad{w}{i}{u}{x'} + 1 \leq i $} {
									$\BW := \true$ \;
								}
							}
						}
					}
				}
			}
			\Return ($c_{k'}^{i}(u), D_{k'}^{i}(u)$)
		}
		\caption{Compute $c_{k'}^{i}(u)$ and $D_{k'}^{i}(u)$ and check whether $ G_{w} $ is min-unique for paths with length at most $ i $ and weight at most $ k' $ from $ u $} 
		\label{algo:induct}
	\end{algorithm}

\subsection{Computing the $\dist{w}{i}{u}{v}$ function}
\label{sec:dist}

In Algorithms \ref{algo:nextk} and \ref{algo:induct}, an important step is to check whether $\dist{w}{i}{u}{v} \leq k$ and if so, get the values of $\dist{w}{i}{u}{v}$ and $\lrad{w}{i}{u}{v}$. These values are obtained from Algorithm \ref{algo:dist}. Algorithm \ref{algo:dist} describes a nondeterministic procedure that takes as input a weighted graph $G_w$, which is min-unique for paths of length at most $i$ and weight at most $ k $ from a source vertex $ u $ and the values $c_k^i (u)$ and $D_k^i (u)$. For any vertex $ v $, if $\dist{w}{i}{u}{v} \leq k$ then it outputs $\true$ and the values of $\dist{w}{i}{u}{v}$ and $\lrad{w}{i}{u}{v}$ along a unique computation path. Otherwise it outputs $\false$ along a unique computation path with $\infty$ as the values of $\dist{w}{i}{u}{v}$ and $\lrad{w}{i}{u}{v}$. All other computation paths halt and reject. As a result we can compute the predicate $\neg (\dist{w}{i}{u}{v} \leq k)$ along a unique path as well.

	\begin{algorithm}[h]
		\SetAlgoNoLine
		\DontPrintSemicolon
		\SetKwFor{For}{for}{do}{endfor}
		\SetKwFor{ForEach}{for each}{do}{endfor}
		\SetKwIF{If}{ElseIf}{Else}{if}{then}{else if}{else}{endif}
		\KwIn{($G, w, u, i, k, c_{k}^{i}(u), D_{k}^{i}(u), v$)}
		\KwOut{ ($\true$ or $\false$), $\dist{w}{i}{u}{v}$, $ \lrad{w}{i}{u}{v} $ }
		\Begin{
			$count := 0$; $sum := 0$; $path.to.v :=  \false$ \;
			$\dist{w}{i}{u}{v} := \infty$; $ \lrad{w}{i}{u}{v}  := \infty $\;
			\ForEach{$x \in V$}{
				Guess non deterministically if $\dist{w}{i}{u}{x} \leq k$ in $ G_{w} $\; \label{alg_line:guess}
				\If {the guess is $\dist{w}{i}{u}{x} \leq k$ } {
					Guess a path of weight $d \leq k$ and length $l \leq i$ from $u$ to $x$ \; \label{alg_line:guesspath}
					(If this fails then halt and reject) \;
					$count := count + 1$; $sum := sum + d$ \;
					\If { $x = v$} {
						$path.to.v := \true$ \;
						$ \dist{w}{i}{u}{v} := d $ \;
						$ \lrad{w}{i}{u}{v} := l $ \;
					}
				}
				
			}\label{alg_line:for}
			\eIf{$count = c_{k}^{i}(u)$ and $sum = D_{k}^{i}(u)$} {\label{alg_line:if}
				\Return ($path.to.v$, $\dist{w}{i}{u}{v}$, $ \lrad{w}{i}{u}{v} $) \; 
			}{
				halt and reject \;
			}
		}
		\caption{An unambiguous routine to determine if $\dist{w}{i}{u}{v} \leq k$ and find $\dist{w}{i}{u}{v}$ and $ \lrad{w}{i}{u}{v} $}
		\label{algo:dist}
	\end{algorithm}

Note that Algorithm \ref{algo:dist} is the only algorithm where we use non-determinism. The algorithm is similar to the unambiguous subroutine of Reinhardt and Allender \cite{RA00} with the only difference being that here we consider weight of a path instead of length of a path. The algorithm assumes that the subgraph induced by all the paths of length at most $ i $ and weight at most $ k $ from $ u $ is min-unique. 

In Line \ref{alg_line:guess} of Algorithm \ref{algo:dist}, for each vertex $ x $ the routine non-deterministically guesses whether $ \dist{w}{i}{u}{x} \leq k $ and if the guess is `$ \true $', it then guesses a path of length at most $ k $ from $ u $ to $ x $. If the algorithm incorrectly guesses for some vertex $ x $ that $ \dist{w}{i}{u}{x} > k $, then the variable $ count $ will never reach $ c_{k}^{i}(u) $ and the routine will reject. If it guesses incorrectly that $ \dist{w}{i}{u}{x} \leq k $ it will fail to guess a correct path in Line \ref{alg_line:guesspath} and again reject that computation. Thus the only computation paths that exit the {\bf for} loop in Line \ref{alg_line:for} and satisfy the first condition of the {\bf if} statement in Line \ref{alg_line:if}, are the ones that correctly guess exactly the set $ \{x \mid \dist{w}{i}{u}{x} \leq k\} $. If the algorithm ever guesses incorrectly the weight $ d $ of the shortest path to $ x $, then if $ \dist{w}{i}{u}{x} > d $ no path of weight $ d $ will be found, and if $ \dist{w}{i}{u}{x} < d $ then the variable $ sum $ will be incremented by a value greater than $ \dist{w}{i}{u}{x} $. In the latter case, at the end of the algorithm, $ sum $ will be greater than $ D_{k}^{i}(u) $, and the routine will reject.

Since $ G_{w} $ is min-unique for paths of length at most $ i $ and weight at most $ k $ from $ u $, only for exactly one computation path $ sum $  and $ count $ will match with $ c_{k}^{i}(u) $ and $ D_{k}^{i}(u) $. So except one computation path which made all the guesses correct, all other paths halt and reject. If $ \dist{w}{i}{u}{v} \leq k $ then even though the algorithm uses non-deterministic choices, it outputs `$ \true $' along a single computation path while all other paths halt and reject. Also if $ \dist{w}{i}{u}{v}  > k $, the algorithm outputs `$ \false $' along a single computation path while all other paths halt and reject. The space complexity of the algorithm is bounded by the size of the weight function $w$.

As a corollary of Theorem \ref{thm:main} we get the following result.
\begin{corollary}
For $s(n) \geq \log n$, $\NSPACE(s(n)) \subseteq \TIUSP(2^{\calo (s(n))}, s^{2}(n))$.
\end{corollary}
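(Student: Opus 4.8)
The plan is to run the algorithm behind Theorem~\ref{thm:main} on the \emph{configuration graph} of a space-bounded nondeterministic machine. Fix $L\in\NSPACE(s(n))$, decided by a nondeterministic machine $M$ that on inputs of length $n$ uses $\calo(s(n))$ work-tape space; by the usual normalization we may assume $M$ has a single start configuration and a single accepting configuration on each input. For an input $x$ with $|x|=n$, let $G_x$ be the configuration graph of $M$ on $x$: the vertices are the $\calo(s(n))$-space configurations, of which there are $N:=2^{\calo(s(n))}$, and there is an edge $c\to c'$ exactly when $M$ can pass from $c$ to $c'$ in one step. Then $x\in L$ if and only if the accepting configuration $t$ is reachable from the start configuration $s$ in $G_x$, i.e.\ iff $(G_x,s,t)$ is a positive instance of directed reachability.

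Since $M$ works in space $\calo(s(n))=\calo(\log N)$, each vertex of $G_x$ has an $\calo(s(n))$-bit name, the full vertex list can be enumerated in $\calo(s(n))$ space, and adjacency (hence the out-neighbour list of any vertex) can be decided deterministically in $\calo(s(n))$ space just by consulting $M$'s transition function. So $G_x$ is a directed graph on $N$ vertices that is ``locally computable'' in $\calo(s(n))$ space even though it is exponentially large. I would then run the unambiguous, polynomial-time, $\calo(\log^2 N)$-space procedure of Theorem~\ref{thm:main} (equivalently Algorithm~\ref{algo:final} together with Algorithms~\ref{algo:minunique}--\ref{algo:dist}) on $(G_x,s,t)$, answering every ``list the vertices'' / ``list the out-neighbours of $v$'' / ``is $(x,v)$ an edge?'' query not by reading $G_x$ off a tape but by invoking the deterministic $\calo(s(n))$-space subroutine just described.

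For the resource count: the reachability procedure itself uses $\calo(\log^2 N)=\calo(s^2(n))$ space, and all of its internal objects --- the $\calo(\log^2 N)$-bit weight functions $W_j$, the $\calo(\log N)$-bit primes, the counters $c_k^i(\cdot)$, $D_k^i(\cdot)$ --- fit inside $\calo(s^2(n))$ bits; the graph-access subroutine contributes an extra $\calo(s(n))$ space, which is absorbed. Its running time is $\poly(N)=2^{\calo(s(n))}$, and each simulated graph query costs only $2^{\calo(s(n))}$ time as well, so the total time is $2^{\calo(s(n))}$. Unambiguity is inherited: the Theorem~\ref{thm:main} algorithm has a unique accepting computation path (all nondeterminism is confined to Algorithm~\ref{algo:dist}, which is unambiguous), and replacing its graph reads by deterministic subroutines introduces no new nondeterministic branching. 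Hence $L\in\TIUSP\!\left(2^{\calo(s(n))},\,s^2(n)\right)$, which is the claim.

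The one step that genuinely needs checking --- and the main obstacle --- is that the algorithm underlying Theorem~\ref{thm:main} accesses its input graph \emph{only} through these local operations (enumerate vertices, enumerate/test edges) and never needs $G_x$ presented explicitly, so that it composes with the $\calo(s(n))$-space description of $G_x$ without any blow-up beyond $\calo(s^2(n))$ space and $2^{\calo(s(n))}$ time; an inspection of Algorithms~\ref{algo:final}--\ref{algo:dist} confirms this. A secondary, mild point is that the simulating machine must be able to lay out an $\calo(s^2(n))$-bit work tape, for which one tacitly assumes $s(n)$ is space-constructible (the standard hypothesis for statements of this kind); if desired this can be removed by the routine device of running the construction under successively larger space guesses.
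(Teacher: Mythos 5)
Your proposal is correct and is precisely the intended argument: the paper states the corollary without an explicit proof as an immediate consequence of Theorem~\ref{thm:main}, the implicit reasoning being to run the unambiguous $\calo(\log^2 N)$-space, $\poly(N)$-time reachability procedure on the $N=2^{\calo(s(n))}$-vertex configuration graph, accessed implicitly via $\calo(s(n))$-space subroutines. Your verification that the algorithms only use local vertex/edge queries, together with the standard space-constructibility caveat, fills in exactly the details the paper leaves to the reader.
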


\bibliographystyle{alpha}
\bibliography{references}
\end{document}